\documentclass[conference,10pt]{IEEEtran}
\usepackage{cite}
\usepackage[cmex10]{amsmath}%
\usepackage{amssymb}%
\usepackage{amsthm}%
\usepackage{amsfonts}
\usepackage{algorithmic}
\usepackage[pdftex]{graphicx}
\usepackage{textcomp}
\usepackage[T1]{fontenc}%
\usepackage{bm}
\usepackage{array}%
\usepackage{mdwmath}
\usepackage{mdwtab}
\usepackage[caption=false]{subfig}
\usepackage{url}
\usepackage{xcolor}
\usepackage[left=1.62cm,right=1.62cm,top=1.8cm]{geometry}

\providecommand{\U}[1]{\protect\rule{.1in}{.1in}}
\newtheorem{theorem}{Theorem}
\newtheorem*{theorem*}{Theorem}

\newtheorem*{argument*}{Argument}
 
\usepackage{bbm}
\usepackage{xpatch}
\makeatletter
\xpatchcmd{\@thm}{\thm@headpunct{.}}{\thm@headpunct{}}{}{}
\makeatother

\DeclareMathOperator{\trace}{Tr}

\DeclareMathOperator{\diag}{diag}

\newcommand{\ket}[1]{\left| #1 \right>} 
\newcommand{\bra}[1]{\left< #1 \right|} 
\mathchardef\mhyphen="2D 

\IEEEoverridecommandlockouts
    
\begin{document}

\newcommand{\papertitle}{No-Go Theorem for Ancilla-Assisted Gaussian Enhancement in Passive-Unitary Estimation}
\title{\papertitle{}}

\author{\IEEEauthorblockN{Zihao Gong
        and Saikat Guha}
\IEEEauthorblockA{Electrical and Computer Engineering Department, University of Maryland, College Park, MD}
}

  \maketitle
\begin{abstract}
We study the maximum quantum Fisher information (QFI) for estimating a single parameter embedded in a generic multimode lossless passive Gaussian unitary using general Gaussian probes under a signal-energy constraint. Unlike previous work, which imposed a total energy constraint on the full probe, we constrain only the transmitted signal modes while allowing an arbitrary number of locally retained ancilla modes with arbitrarily large energy. We prove that this additional freedom does not increase the maximum achievable QFI;  the optimum remains identical to that attainable without extra ancilla energy. The same conclusion also extends to the sequential  setting under a total energy constraint. We also characterize the family of optimal probe states and show that entanglement is not necessary to attain the optimum in the lossless setting. This extends the result of Matsubara \textit{et al.} to the physically motivated signal-energy-constrained scenario and establishes a no-go theorem for ancilla-assisted Gaussian enhancement in noiseless passive-unitary estimation.
\end{abstract}

\section{Introduction} \label{sec:introduction}
\par

The fundamental precision limits of parameter estimation are studied in quantum estimation theory~\cite{helstrom1969quantum}. In particular, for a given probe state, the quantum Fisher information (QFI) determines the ultimate precision of estimating a single parameter achievable by any measurement and unbiased estimator. Quantum metrology seeks probe states and measurements that use quantum resources to enhance estimation precision~\cite{giovannetti2006,giovannetti2011advances}. Optimizing the QFI over the available probe states is therefore a central task in quantum parameter estimation.

Gaussian states form an important class of continuous-variable quantum states. They admit a simple description in terms of  first and  second moments, and are experimentally relevant in quantum optics~\cite{Braunstein2005,Weedbrook2012}. In addition, Gaussian resources such as squeezing can enhance metrological performance; a landmark example is the squeezed-state interferometric sensitivity improvement introduced by Caves in 1981~\cite{Caves1981}. The QFI of Gaussian states has been extensively studied and can be expressed directly in terms of the first and second moments of the state in phase space~\cite{monras2013phase,serafini2023quantum}. These properties make Gaussian states a natural class of  probe states for continuous-variable quantum metrology.

We consider the problem of maximizing the QFI for estimating a single parameter embedded in a generic multimode lossless passive Gaussian unitary using the most general Gaussian-state probe  under a signal-energy constraint. Such unitaries describe photon-number-preserving linear-optical circuits, including multimode interferometers built from beam splitters and phase shifters~\cite{Clements2016}. The single parameter of interest may be a function of these phase shifts.
 The most general pure Gaussian state can be generated from a tensor product of single-mode squeezed coherent states by passive Gaussian unitaries~\cite{serafini2023quantum}. Since the QFI is convex in the probe state, it suffices to optimize over pure Gaussian probes.

The closest prior work is due to Matsubara \textit{et al.}, who studied the same estimation problem under a total mean photon-number constraint on the probe state and identified the maximum achievable QFI~\cite{matsubara2019optimal}. In this work, we instead consider a physically motivated setting that we constrain only the energy of the transmitted signal modes, while allowing the locally retained ancilla modes to carry arbitrary energy. Although entanglement is known to enhance a variety of other quantum tasks~\cite{Saikat2020ISIT,gong22losssensing,Gong2025}, we show that this additional freedom does not increase the maximum achievable QFI.

The rest of this paper is organized as follows. In Sec.~\ref{sec:Problem_setup}, we formulate the ancilla-assisted Gaussian sensing problem under a signal-energy constraint and introduce the corresponding QFI expression. In Sec.~\ref{sec:Main_result}, we present the main theorem and characterize the family of optimal probe states. In Sec.~\ref{sec:proof}, we provide the detailed proofs of the upper bounds used in the main result.

\section{Problem setup} \label{sec:Problem_setup}
\begin{figure}[ht] 
    \centering
        \includegraphics[width=0.9\linewidth]{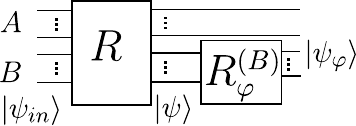}
  \caption{Setup for ancilla-assisted single-parameter estimation using Gaussian probes. The probe consists of $m_A$ ancilla modes and $m_B$ signal modes. The initial state $\ket{\psi_{in}}$ is taken to be a tensor product of single-mode squeezed coherent states. By applying a passive linear interferometer $R$, we obtain  the general pure Gaussian probe state $\ket{\psi}$. The $m_B$ signal modes are transmitted through the passive lossless Gaussian unitary $R_{\varphi} = I^{(A)}\otimes R_{\varphi}^{(B)}$, which encodes the unknown parameter $\varphi$, while the $m_A$ ancilla modes remain unchanged. The final output state is denoted by $\ket{\psi_{\varphi}}$.}
\end{figure}

\subsection{Probe state}
We consider a bipartite sensing problem using a Gaussian probe, where subsystem  $A$  consists of  $m_A$  ancilla modes and subsystem  $B$  consists of  $m_B$  signal modes. The signal subsystem  $B$  passes through a passive Gaussian unitary that encodes the unknown parameter  $\varphi$ , while the ancilla subsystem  $A$  is retained and may be entangled with the signal at the input.

To generate the general pure Gaussian probe state, we begin with a tensor product of single-mode squeezed coherent states $\ket{\psi_{in}}$  and then apply a passive Gaussian unitary  $R$.
 In the $\hat{q}\hat{q}\hat{p}\hat{p}$ form, its mean vector $ d_0 $ and covariance matrix $\Gamma_0 $ are given by
\begin{align}
   d_0 &=  \begin{bmatrix}
        d^{(A)}_{0}, \
        d^{(B)}_0,  \
        \Tilde{d}^{(A)}_0, \
        \Tilde{d}^{(B)}_0 
    \end{bmatrix}^T, \ \   \Gamma_0  = \frac{1}{2} e^{2r}  \\
    r &= \diag\left( r_A,r_B,-r_A,-r_B \right), \label{eq:r}
\end{align}
where  
 $ r_{A} = \diag\left(r_{A}^{(1)},r_{A}^{(2)},\dots,r_{A}^{(m_A)}\right)$, $r_{B} = \diag\left(r_{B}^{(1)},r_{B}^{(2)},\dots,r_{B}^{(m_B)}\right) $, and the superscript $T$ denotes the transpose of a matrix or a vector.
Without loss of generality, we assume $ r_{A}^{(1)}\ge r_{A}^{(2)} \ge\dots \ge r_{A}^{(m_A)}\ge 0 $, and $ r_{B}^{(1)}\ge r_{B}^{(2)} \ge\dots \ge r_{B}^{(m_B)} \ge 0$.
The passive Gaussian unitary $R$ has symplectic matrix representation:
\begin{align}
     R &= W^{\dagger}\begin{bmatrix}
        U &0\\
        0 &U^*
    \end{bmatrix}W, \ \ 
    W  = \frac{1}{\sqrt{2}} \begin{bmatrix}
        I &iI\\
        I  &-iI
    \end{bmatrix}, \ \
\end{align}
where $R$ is an orthogonal matrix, $I$ denotes the identity matrix, and $U$ is a unitary matrix decomposed as
\begin{align}
        U = \begin{bmatrix}
        U^{(A)} &U^{(AB)} \\
        U^{(BA)} &U^{(B)} 
    \end{bmatrix}. \label{eq:U}
\end{align}
From the unitary condition $U U^{\dagger}  =  I$, we have
\begin{align}
    U^{(BA)}U^{(BA)\dagger} + U^{(B)}U^{(B)\dagger}  = I^{(B)}, \label{eq:unitary condition}
\end{align}
where $I^{(B)}$ is $m_B \times m_B $ identity matrix.
After applying $R$, the resulting pure Gaussian probe state $\ket{\psi}$ has mean vector $d $ and covariance matrix  $\Gamma $:
\begin{align}
    d &=   \begin{bmatrix}
        d^{(A)} & d^{(B)} &\Tilde{d}^{(A)} & \Tilde{d}^{(B)} 
    \end{bmatrix}^T = R d_0 , \label{eq: M of rhop} \\
   \Gamma &=  R \Gamma_0 R^{\dagger} \\
   &= \frac{1}{2} W^{\dagger}\begin{bmatrix}
       U\cosh 2rU^{\dagger} &U \sinh 2r U^T \\
       U^* \sinh 2r U^{\dagger} & U^* \cosh 2r U^T
   \end{bmatrix}W. \label{eq: V of rhop}
\end{align}

\subsection{Energy constraint} 
Unlike the total-energy constraint used in prior work~\cite{matsubara2019optimal}, we constrain only the mean photon number in the transmitted signal subsystem $B$.
The mean photon number of the signal modes of the probe state $\ket{\psi}$ is upper bounded by $\bar{n}$, i.e.,
\begin{align}
     \frac{1}{2}\trace\left( \trace_A[\Gamma] - \frac{1}{2}I^{(B)}\right) + \frac{1}{2}d_{B}^{T}d_{B} \le \Bar{n} ,
\end{align}
 where $\trace_A$ denotes the partial trace over subsystem $A$. 
Write the energy in displacement as $\bar{n}_d^{(B)} = \frac{1}{2}d_{B}^{T}d_{B}$, where $d_B = [d^{(B)}, \tilde{d}^{(B)}]^T $ denotes the mean vector of subsystem $B$. The remaining  energy from  squeezing is upper bounded by
\begin{align}
     \frac{1}{2}\trace\left( \trace_A[\Gamma] - \frac{1}{2}I^{(B)}\right) 
   = \trace\left( M   \right) 
   \le  \bar{n} - \bar{n}_d^{(B)}, \label{eq:energy constraint}
\end{align}
where 
\begin{align}
  \!  M\! = \! U^{(B)} \sinh^2(r_B) U^{(B)\dagger} \!+ U^{(BA)} \sinh^2(r_A) U^{(BA)\dagger} \label{eq:M}
\end{align}
 follows from straightforward  algebra.


\subsection{Quantum Fisher information}
The parameter $\varphi$ is embedded  in the passive Gaussian unitary represented by the orthogonal matrix $R_{\varphi}$:
\begin{align}
     R_{\varphi} = W^{\dagger}\begin{bmatrix}
        U_{\varphi} &0\\
        0 &U_{\varphi}^*
    \end{bmatrix}W, \ \
    U_{\varphi} = \begin{bmatrix}
       I^{(A)} &0   \\
           0   &U_{\varphi}^{(B)} 
    \end{bmatrix}, \label{eq:R varphi}
\end{align}
where  $ U_{\varphi}$ and  $ U_{\varphi}^{(B)}$ are unitary.
Because the input probe is pure and  $R_\varphi$  is unitary, the output state  $\ket{\psi_{\varphi}}$  remains a pure Gaussian state with mean vector and covariance matrix 
\begin{align}
    d_{\varphi} = R_{\varphi}d, \ \ \ \Gamma_{\varphi} = R_{\varphi}\Gamma R_{\varphi}^{\dagger}.
\end{align}
 Thus, the QFI with respect to $\varphi$ is \cite{monras2013phase,serafini2023quantum}
\begin{align}
    F_{\varphi} = \frac{1}{4}\trace\left[ \left(\Gamma_{\varphi}^{-1} \frac{\partial \Gamma_{\varphi} }{\partial\varphi}\right)^2 \right] + \frac{ \partial d^{T}_{\varphi} }{\partial \varphi} \Gamma_{\varphi}^{-1} \frac{ \partial d_{\varphi} }{\partial \varphi}. \label{eq:QFI no loss}
\end{align}
Define the Hermitian generator associated
with the signal-mode unitary as
\begin{align}
    g_{\varphi} = i U_{\varphi}^{(B)\dagger}
\frac{\partial U_{\varphi}^{(B)}}{\partial \varphi}. \label{eq:g varphi}
\end{align}
Under a mean-photon-number constraint on the probe state,
Matsubara et al. showed that the optimal Gaussian QFI for this
passive-unitary estimation problem is
\begin{align}
    F_{\varphi,\mathrm{M}}^{\mathrm{opt}}
=
8\|g_{\varphi}^2\|\bar{n}(\bar{n}+1),
\end{align}
where $\|A\|$ is the spectral norm, i.e., the largest singular value of $A$.
In contrast, we impose the constraint only on the transmitted
signal energy and answer whether arbitrary-energy ancilla modes
can increase the optimal QFI.

\section{Main result} \label{sec:Main_result}
\subsection{Main theorem} \label{subsec:Main_theorem}
\begin{theorem}
Under a mean photon-number constraint $\bar{n}$ on the signal modes, the maximum QFI over all Gaussian probes for estimating a parameter embedded in a passive Gaussian unitary is
\begin{align}
    F_\varphi^{\mathrm{opt}} = 8 \|g_\varphi^2\| \bar{n}(\bar{n}+1).
\end{align}
  Hence, allowing an arbitrary number of ancilla modes and arbitrarily large ancilla energy does not improve the optimal QFI.
\end{theorem}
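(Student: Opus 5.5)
Achievability is immediate: we take $m_A=0$ (or leave the ancilla in vacuum and unentangled), so that the signal-energy constraint coincides with the total-energy constraint. Matsubara \emph{et al.}'s optimal probe then attains $8\|g_\varphi^2\|\bar n(\bar n+1)$. The whole task is therefore the converse: for every choice of $r_A,r_B,U,d_0$ with $\trace(M)+\bar n_d^{(B)}\le\bar n$, we must show $F_\varphi\le 8\|g_\varphi^2\|\bar n(\bar n+1)$. A convenient first reduction is that the QFI at a fixed $\varphi$ depends on $U_\varphi^{(B)}$ only through the generator, after absorbing $U_\varphi^{(B)}$ into the output frame. So we may write the derivative of $R_\varphi$ as $R_\varphi$ times the symplectic generator built from $G=0^{(A)}\oplus g_\varphi$.

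The plan is to split $F_\varphi$ in \eqref{eq:QFI no loss} into the squeezing term and the displacement term and bound each separately. For the covariance term, we use the complex form \eqref{eq: V of rhop} with $\Gamma^{-1}$ computed explicitly for a pure state. The standard computation gives $\frac14\trace[(\Gamma^{-1}\partial\Gamma)^2]=\trace$ of an expression in the off-diagonal block $S=U\sinh 2r\,U^T$ and the diagonal block $C=U\cosh 2r\,U^\dagger$ sandwiched with $G$. This yields something of the form
\[
F_{\rm sq}=2\trace\bigl(GCGC-GSG^*S^*\bigr)-2\trace(G^2)\cdot 0+\dots ,
\]
which equals $4\trace(G^2 N)+\dots$ in terms of the second moments. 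The target intermediate bound is
\[
F_{\rm sq}\le 8\|g_\varphi^2\|\bigl(\trace(M)^2+\trace(M)\bigr).
\]
To get it, we exploit that only the $B$ block of $G$ is nonzero. Every term then involves only the rows of $U$ indexed by $B$, namely $[U^{(BA)},U^{(B)}]$, whose squared-sinh weighted Gram matrix is exactly $M$ in \eqref{eq:M}. We bound $\trace(GXGY)\le\|g_\varphi^2\|\,\trace(X_B)\trace(Y_B)$-type inequalities for the positive blocks, using Cauchy--Schwarz, and then use $\sinh^2 2r=4(\sinh^4 r+\sinh^2 r)$ together with $\trace(P^2)\le(\trace P)^2$ for $P\succeq0$. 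The essential point is that ancilla energy $r_A$ enters only through $U^{(BA)}\sinh^2(r_A)U^{(BA)\dagger}\preceq M$, so arbitrarily large $r_A$ is harmless once $\trace M$ is fixed.

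For the displacement term, $\partial d_\varphi=R_\varphi J_G d$ gives $\partial d^T\Gamma^{-1}\partial d\le \|\Gamma_B^{-1}\text{-restricted}\|\,\|g_\varphi\|^2 d_B^Td_B$. Here the relevant inverse covariance, restricted to the $B$ support of the generator, has norm at most $2(1+2\trace M+2\sqrt{\trace M(\trace M+1)})$. This gives $F_{\rm disp}\le 8\|g_\varphi^2\|\,\bar n_d(\sqrt{\trace M+1}+\sqrt{\trace M})^2$. Writing $s=\trace M$ and $t=\bar n_d$ with $s+t\le\bar n$, we finish with the scalar inequality
\[
s(s+1)+t(\sqrt{s+1}+\sqrt s)^2\le(s+t)(s+t+1),
\]
which reduces to $2t\sqrt{s(s+1)}\le t(2s+1)+t^2-t$, i.e.\ AM--GM.

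The main obstacle is the squeezing term. Because $\Gamma$ is the full $AB$ covariance and $\Gamma^{-1}$ mixes in ancilla correlations, it is not obvious that large entangled ancilla squeezing, which makes $\Gamma_B$ highly mixed, cannot increase the trace. I would first use purity, $\Gamma^{-1}=-\Omega\Gamma\Omega\cdot4$, to eliminate the inverse entirely. This turns $F_{\rm sq}$ into a polynomial in $\Gamma$, namely $\propto\trace[(\Omega\,\partial\Gamma)^2]$, in which only the $B$ rows of $U$ appear quadratically. I would then bound it by the spectral inequality above. As a fallback, a Schmidt/Bloch–Messiah reduction of the pure bipartite Gaussian state to two-mode-squeezed pairs plus local squeezing would make the dependence on ancilla explicit.
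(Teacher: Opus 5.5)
\textbf{Main gap: the covariance bound.} Your architecture matches the paper's. You split $F_\varphi=F_\varphi^{(1)}+F_\varphi^{(2)}$, aim for $F_\varphi^{(1)}\le 8\|g_\varphi^2\|s(s+1)$ with $s=\trace M$, bound $F_\varphi^{(2)}$ through the $B$ block of $\Gamma^{-1}$, and combine using $s+t\le\bar n$ with $t=\bar n_d^{(B)}$. Achievability via $m_A=0$ is fine. The gap is the covariance bound, which carries the entire no-go content. You leave it as a plan, and as described the plan would not close.

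Your key claim is that ancilla energy enters only through $U^{(BA)}\sinh^2(r_A)U^{(BA)\dagger}\preceq M$. This is false for the anomalous signal moments $Z=\langle\bm{\hat b}_B\bm{\hat b}_B^T\rangle=\tfrac14 Y_{22}=-\tfrac12\bigl(U^{(BA)}\sinh(2r_A)U^{(BA)T}+U^{(B)}\sinh(2r_B)U^{(B)T}\bigr)$. Suppose you expand their contribution with $\sinh^2 2r=4(\sinh^4 r+\sinh^2 r)$, for example bounding $\|Z\|_F^2=\tfrac14\trace\bigl(K\sinh(2r)K^TK^*\sinh(2r)K^\dagger\bigr)$ via $K^TK^*\preceq I$, where $K=[U^{(BA)}\ U^{(B)}]$. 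You are then left with $\trace(K\sinh^4(r)K^\dagger)$, not $\trace(M^2)$.

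Since $KK^\dagger=I$, one has $K\sinh^4(r)K^\dagger\succeq M^2$, so $\trace(M^2)\le(\trace M)^2$ does not reach it. This quantity is in fact unbounded at fixed $\trace M$. Take one signal mode and one ancilla with $\sinh^2 r_A=N$, $r_B=0$, $U^{(BA)}=\epsilon$ and $\epsilon^2N=\bar n$. Then $\trace M=\bar n$ but $\trace(K\sinh^4(r)K^\dagger)=\bar nN$. The paper's own step from $2\trace[\tilde g_\varphi^2(X^2-I)]$ to $8\trace[g_\varphi^2(M^2+M)]$ has exactly this defect: the $BB$ block of $X^2$ is $X_{BB}^2+X_{BA}X_{BA}^\dagger$, not $X_{BB}^2$. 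So do not borrow that step.

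\textbf{The missing idea.} Control $Z$ by $M$ through the uncertainty principle, not through $r$.
\begin{itemize}
\item For zero-mean fluctuations, Wick's theorem gives $F_\varphi^{(1)}=4\,\mathrm{Var}(\bm{\hat b}_B^\dagger g_\varphi\bm{\hat b}_B)=4\bigl[\trace(g_\varphi Mg_\varphi M)+\trace(g_\varphi^2M)+\trace(g_\varphi Zg_\varphi^TZ^*)\bigr]$, up to the transposition convention for $g_\varphi$. Your purity reduction is a legitimate way to reach such an expression in the reduced signal moments.
\item The first two terms are at most $\|g_\varphi^2\|(s^2+s)$.
\item The last term is at most $\|g_\varphi^2\|\,\|Z\|_F^2$. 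In a Takagi basis of $Z$, the Cauchy--Schwarz bound $|\langle\tilde b_k\tilde b_k\rangle|^2\le\tilde n_k(\tilde n_k+1)$ with $\sum_k\tilde n_k=s$ gives $\|Z\|_F^2\le s(s+1)$. This is the same device the paper uses for $\|Y_{22}\|$.
\end{itemize}
Hence $F_\varphi^{(1)}\le 8\|g_\varphi^2\|s(s+1)$, independently of $r_A$.

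\textbf{Second gap: a factor-of-two slip in the displacement term.} Your bound $\|(\Gamma^{-1})_{BB}\|\le 2(\sqrt s+\sqrt{s+1})^2$ is correct. Combined with $\|g_\varphi\|^2 d_B^Td_B=2\|g_\varphi^2\|t$, it gives $F_\varphi^{(2)}\le 4\|g_\varphi^2\|t(\sqrt s+\sqrt{s+1})^2$, not $8\|g_\varphi^2\|t(\sqrt s+\sqrt{s+1})^2$. As a check, for a single-mode phase shift a coherent probe has $F_\varphi=4\bar n_d^{(B)}$.

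With the factor $8$, the scalar inequality you state is false. Its right side minus left side equals $t^2-2t\sqrt{s(s+1)}$, which is negative for small $t>0$; your AM--GM reduction drops a $2st$ term. With the correct constant you need $2s(s+1)+t(\sqrt s+\sqrt{s+1})^2\le 2(s+t)(s+t+1)$. This is equivalent to $2\sqrt{s(s+1)}\le 2s+1+2t$, which does follow from AM--GM. That is the paper's combination step; the paper uses the cruder $\sqrt{s(s+1)}<s+\tfrac12$ to reach $8\|g_\varphi^2\|(\bar n^2+\bar n-t^2)$.
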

\begin{proof}
By the convexity of the QFI, it suffices to optimize over pure probe states.
Using \eqref{eq:R varphi}, we define the Hermitian matrix
\begin{align}
    \tilde{G}_{\varphi}  \!= \! i R_{\varphi}^\dagger \frac{\partial R_{\varphi}}{\partial \varphi}\!=\! W^{\dagger}\begin{bmatrix}
    \tilde{g}_{\varphi} & 0 \\
    0  & -\tilde{g}_{\varphi}^{*}
\end{bmatrix} W , \    \tilde{g}_{\varphi} = \begin{bmatrix}
    0 & 0 \\
    0  & g_{\varphi}
\end{bmatrix}, \label{eq:tilda g varphi}
\end{align} 
where $g_{\varphi}$ is in \eqref{eq:g varphi}.
Substituting \eqref{eq:tilda g varphi} into the QFI formula in \eqref{eq:QFI no loss}, we decompose the QFI as
\begin{align}
    F_{\varphi} &=  F_{\varphi}^{(1)}+F_{\varphi}^{(2)},
    \end{align}
 \vspace{0.07in}   where $F_{\varphi}^{(1)}$ is the covariance-matrix contribution and $F_{\varphi}^{(2)}$ is the displacement contribution: \begin{align}
    F_{\varphi}^{(1)}& =    \frac{1}{2}\trace\left[  \tilde{G}_{\varphi}  \Gamma^{-1} \tilde{G}_{\varphi}\Gamma   - \tilde{G}_{\varphi}^2\right]\label{eq: QFI 1} \\
     F_{\varphi}^{(2)}  &= d^{T} \tilde{G}_{\varphi} \Gamma^{-1}  \tilde{G}_{\varphi}d. \label{eq: QFI 2}
\end{align}
The displacement $d$ and covariance matrix $ \Gamma$  are defined in \eqref{eq: M of rhop} and \eqref{eq: V of rhop}, respectively, and
\begin{align}
\Gamma^{-1} = 2 W^{\dagger} \begin{bmatrix}
       U \cosh{2r} U^{\dagger} &   -U \sinh{2r} U^{T} \\
       -U^* \sinh{2r} U^{\dagger} &   U^* \cosh{2r} U^{T}
    \end{bmatrix} W. \label{eq:Gamma^-1}
\end{align}
We first upper bound $F_{\varphi}^{(1)}$. 
Substituting the expressions for $\Gamma$ and $\Gamma^{-1}$ into \eqref{eq: QFI 1} yields
\begin{align}
    F_\varphi^{(1)}
    = \trace\left(\tilde{g}_\varphi X \tilde{g}_\varphi X
    + \tilde{g}_\varphi Y \tilde{g}_\varphi^T Y^{\dagger}\right)
    - \trace(\tilde{g}_\varphi^2), \label{eq:F1 1}
\end{align}
where
\begin{align}
    X = U \cosh2r U^\dagger, \qquad
    Y = U \sinh2r U^T, \label{eq:X Y}
\end{align}
Applying the inequalities $\trace(PQPQ)\le \trace(P^2Q^2)$ and $\trace(PQP^TQ^T)\le \trace(P^2Q^2)$ for Hermitian $P$ and $Q$, \eqref{eq:F1 1} yields
\begin{align}
    F_\varphi^{(1)}
    \le  2\,\trace\!\left[\tilde{g}_\varphi^2\left(X^2-I\right)\right]. \label{eq:F1 u2}
\end{align}
Expanding the right-hand side and using $\trace(PQ)\le \|P\|\trace(Q)$  for Hermitian and positive semidefinite $P$ and $Q$, we obtain 
\begin{align}
    F_\varphi^{(1)}
    \le 4 \|g^2_{\varphi} \| \trace \left(   M^2  + M\right), \label{eq:F1 u1}
\end{align}
where $M$ is in \eqref{eq:M}.
Using $ \trace(M^2)\le \trace(M)^2 $ for positive semidefinite $M$, together with the signal-energy constraint \eqref{eq:energy constraint}, \eqref{eq:F1 u1} yields
\begin{align}
    F_\varphi^{(1)}
    \le
    8\|g_\varphi^2\|
    \left[
        \left(\bar{n}-\bar{n}_d^{(B)}\right)^2
        +
        \left(\bar{n}-\bar{n}_d^{(B)}\right)
    \right]. \label{eq:F 1 upper bound}
\end{align}
The detailed derivation on upper bound of $  F_\varphi^{(1)} $ is given in Sec.~\ref{subsec:QFI 1}.

For the displacement contribution $F_\varphi^{(2)}$, using the Rayleigh quotient \cite{higham2008functions} together with standard norm inequalities, we obtain
\begin{align}
    F_\varphi^{(2)}
    \le
2\|g_\varphi^2\|\left(\|X_{22}\|+\|Y_{22}\|\right)\bar{n}_d^{(B)},
\end{align}
where 
\begin{align}
   \!\! X_{22} \! &= \!2 U^{(BA)} \cosh{2 r_A}U^{(BA)\dagger} \! + \! 2 U^{(B)} \cosh{2 r_B}U^{(B)\dagger} \label{eq:X22} \\
    Y_{22}\! & = \! -2U^{(BA)}\! \sinh\!{2 r_A}U^{(BA)T} \! \! - \! 2 U^{(B)} \!\sinh\!{2 r_B}U^{(B)T} \!\!. \label{eq:Y22}
\end{align}
We then bound $\|X_{22}\|$ using $\|P\|\le \trace(P)$ for Hermitian and positive semidefinite $P$, and bound $\|Y_{22}\|$ using Takagi's decomposition \cite{horn2012matrix} and the Cauchy--Schwarz inequality. This yields
\begin{align}
    F_\varphi^{(2)}
    \le
    16\|g_\varphi^2\|
    \left(\bar{n}-\bar{n}_d^{(B)}+\frac{1}{2}\right)\bar{n}_d^{(B)}. \label{eq:F 2 upper bound}
\end{align}
The detailed derivation on upper bound of $  F_\varphi^{(2)} $ is given in Sec.~\ref{subsec:QFI 2}.
 Combining \eqref{eq:F 1 upper bound} and \eqref{eq:F 2 upper bound}, we have
\begin{align}
    F_{\varphi}&\le 8\| g_{\varphi}^2 \| \left(  \left(\Bar{n} -\Bar{n}_{d}^{(B)}\right)^2  +  \left(\Bar{n} -\Bar{n}_{d}^{(B)}\right ) \right. \nonumber\\
    &\left.\qquad\qquad\qquad\qquad+  2\left(\Bar{n} -\Bar{n}_{d}^{(B)} +\frac{1}{2} \right) \Bar{n}_{d}^{(B)}  \right) \\
    & = 8\| g_{\varphi}^2 \|   \left( \bar{n}^2 +  \bar{n} -\bar{n}_d^{(B)2}\right) \label{eq: F upper 0} \\
    & \le 8\| g_{\varphi}^2 \| \Bar{n}(\Bar{n} +1)  .\label{eq: F upper 3}
\end{align}


Finally, we observe that the resulting upper bound on the QFI, $8 \|g_{\varphi}^2 \| \Bar{n}(\Bar{n}+1)$, coincides with the QFI achievable when no additional energy is allocated to the ancilla modes~\cite{matsubara2019optimal}. As shown in Sec.~\ref{subsec:optimal_probe}, this upper bound is achievable. Hence, the optimal QFI under the signal-energy constraint is $8\|g_\varphi^2\|\bar{n}(\bar{n}+1)$, and unconstrained ancilla modes provide no improvement.

\end{proof}

\subsection{Optimal probe states attaining the QFI bound}\label{subsec:optimal_probe}
We now characterize probe states that attain the upper bound in \eqref{eq: F upper 3}. To saturate this bound, the signal displacement energy is set to be zero, i.e.,
\begin{align}
    \bar{n}_d^{(B)}=0.
\end{align}
Hence the optimal probe has zero displacement, so that $d=0$ and $F_\varphi^{(2)}=0$. The optimization therefore reduces to maximizing the covariance contribution $F_\varphi^{(1)}$.
First, the equality condition for $\trace(M^2)\le\trace(M)^2 $, used in obtaining \eqref{eq:F 1 upper bound}, requires $M$ to be rank one. 
Without loss of generality, we place the active mode in the first position, so that 
\begin{align}
    r_A = \diag(r_A^{(1)},0,\dots,0), \  \ \
     r_B = \diag(r_B^{(1)},0,\dots,0).
\end{align} 
Second, the equality in \eqref{eq:F1 u1} requires that the support of $M$ be contained in the eigenspace of $g_\varphi^2$ associated with its largest eigenvalue. 
We now use this alignment condition to determine $U$ in \eqref{eq:U}. 
First, we take 
\begin{align}
    m_A = m_B = \dim (g_{\varphi}) = m.
\end{align}
Applying the cosine-sine decomposition \cite{paige1981towards} to the blocks of $U$, we obtain unitary matrices $V_B$, $P_B$, $V_A$, and $P_A$, together with diagonal matrices
\begin{align}
    C=\diag(c_1,\dots,c_m),\qquad
    S=\diag(s_1,\dots,s_m),
\end{align}
such that
\begin{align} 
    U^{(A)} &= P_A C V_B^\dagger,
    \qquad
    U^{(AB)} = -P_A S V_A^\dagger \\
    U^{(B)}& = P_B S V_B^\dagger,
    \qquad
    U^{(BA)} = P_B C V_A^\dagger,
\end{align}
with $ C^2+S^2=I$.
Let $U_g$ be the unitary matrix such that $ U_g^{\dagger} g_\varphi^2U_g  $ is diagonal. To ensure that the support of $M$ is contained in the eigenspace of $g_\varphi^2$ associated with its largest eigenvalue, we choose
\begin{align}
    P_B=U_g,\quad V_B=V_A=I,\quad C=cI,\quad S=sI,
\end{align}
with $c^2+s^2=1$. Then, by the unitary condition, we take $P_A = U_g D$, where $D$ is a diagonal phase matrix. Thus, 
\begin{align}
    U = \begin{bmatrix}
        cU_g D & -sU_g D \\
        sU_g & cU_g
    \end{bmatrix}.
\end{align}
For a probe that saturates the QFI upper bound, the squeezing-energy constraint in  \eqref{eq:energy constraint} and \eqref{eq:M} reduces to
\begin{align}
    c^2 \sinh^2 \left(r_A^{(1)}\right) + s^2 \sinh^2 \left( r_B^{(1)}\right) = \bar{n}. \label{eq:QFI condition}
\end{align}
Once the above two equality conditions hold, it is straightforward to verify that the remaining inequalities in deriving \eqref{eq:F1 u2} are also saturated.
Therefore, we obtain a family of probes that saturate the QFI upper bound in \eqref{eq: F upper 3}.

Here, we present two representative examples. The first example is 
\begin{align}
   \!\!\! c \! = \! 0, \ s \!= \!1, \  r_A^{(1)}  \! =\! 0, \  \sinh^2 \left( r_B^{(1)}\right) \! = \! \bar{n}, \ U \!=\! \begin{bmatrix}
       0 & 0 \\
        0 & U_g
    \end{bmatrix},
\end{align}
 which reduces to Matsubara's optimal probe state. The second example is  
 \begin{align}
     c = \frac{1}{\sqrt{2}}, \  s = \frac{1}{\sqrt{2}}, \ &\sinh^2 \left(r_A^{(1)}\right) = \sinh^2 \left( r_B^{(1)}\right) = \bar{n}, \\
     U &=  \frac{1}{\sqrt{2}}\begin{bmatrix}
         U_g   & - U_g   \\
         U_g &  U_g
    \end{bmatrix},
 \end{align}
 which corresponds to a generalized two mode squeezed vacuum state. These examples further show that, under the signal-energy constraint \eqref{eq:energy constraint}, entanglement is not necessary to attain the optimal QFI in the lossless setting. 
 By contrast, entanglement may become beneficial in the presence of loss and noise.

\subsection{Sequential extension}

We also consider the $L$ sequential setting studied by Matsubara \textit{et al.}, in which the same parameter-dependent channel is interrogated $L$ times sequentially with known passive Gaussian unitary controls inserted between each channel use.  Using the same argument as in the proof of the theorem, one finds that the norm of the generator matrix is upper bounded by $L\|g_\varphi\|$. Therefore, the corresponding QFI is upper bounded by
\begin{align}
    F_\varphi^{(L)} \le 8L^2\|g_\varphi^2\|\, \bar{n}(\bar{n} +1),
\end{align}
which coincides with the bound obtained in~\cite{matsubara2019optimal}. Hence, entanglement does not improve the optimal QFI in the sequential setting either.

\section{Proofs} \label{sec:proof}
\subsection{Bound on $F_{\varphi}^{(1)}$}\label{subsec:QFI 1}
In this subsection, we derive the upper bound on  $ F_{\varphi}^{(1)} $ in \eqref{eq: QFI 1}.
Substituting the expressions for $\tilde{G}_{\varphi}$ in \eqref{eq:tilda g varphi}, $\Gamma$ in \eqref{eq: V of rhop}, $X$ and $Y$ in \eqref{eq:X Y}, and $\Gamma^{-1}$ in \eqref{eq:Gamma^-1} into \eqref{eq: QFI 1}, we obtain
\begin{align}
    F_{\varphi}^{(1)}  &= \frac{1}{2}\trace  \left( \tilde{g}_{\varphi} X \tilde{g}_{\varphi} X + \tilde{g}_{\varphi} Y \tilde{g}_{\varphi}^{T} Y^{\dagger} + \tilde{g}_{\varphi}^{T}  Y^{\dagger}\tilde{g}_{\varphi} Y \right. \nonumber \\
    & \left. \qquad\qquad+ \tilde{g}_{\varphi}^T X^T \tilde{g}_{\varphi}^T X^T \right) - \frac{1}{2}\trace\left[\tilde{g}_{\varphi} ^2 + (\tilde{g}_{\varphi}^{T})^2 \right].
\end{align}
Using $\trace(A^T) = \trace(A)$, we obtain
\begin{align}
    F_{\varphi}^{(1)} & =  \trace  \left( \tilde{g}_{\varphi} X \tilde{g}_{\varphi} X  +\tilde{g}_{\varphi} Y \tilde{g}_{\varphi}^{T} Y^{\dagger}   \right) -  \trace\left[\tilde{g}_{\varphi} ^2     \right].\label{eq:QFI first} 
\end{align}
We first bound the first and the third terms in \eqref{eq:QFI first}. Using $\trace(PQPQ) \le \trace(P^2 Q^2)$ with $P = U^{\dagger} \tilde{g}_{\varphi} U  $ and $Q = \cosh 2r $, we have
\begin{align}
    &\quad\trace  \left( \tilde{g}_{\varphi}  U \cosh (2r)  U^{\dagger} \tilde{g}_{\varphi}  U \cosh (2r) U^{\dagger}      \right) -  \trace\left[\tilde{g}_{\varphi} ^2     \right] \nonumber\\
    &  \le \trace  \left( \tilde{g}_{\varphi}^2 \left( U \cosh^2 (2r)  U^{\dagger} - I \right)      \right)   \\
    & = \trace  \left( \tilde{g}_{\varphi}^2 \left( \left( U \cosh (2r)  U^{\dagger} \right)^2 - I \right)      \right). \label{eq: QFI1  11}
\end{align}
Next, we consider the second term in \eqref{eq:QFI first}, using $\trace(PQP^TQ^T)\le \trace(P^2Q^2)$ with $P = U^{\dagger} \tilde{g}_{\varphi} U  $ and $Q = \sinh (2r) $, we obtain
    \begin{align}
        &\quad\trace\left(\tilde{g}_{\varphi} U \sinh (2r)\, U^{T} \tilde{g}_{\varphi}^{T} U^{*} \sinh (2r) U^{\dagger} \right) \nonumber
        \\
        &\le  \trace\left(\tilde{g}_{\varphi}^2 \left( U \sinh (2r)  U^{\dagger} \right)^2 \right) \\
        &= \trace  \left( \tilde{g}_{\varphi}^2 \left( \left( U \cosh (2r)  U^{\dagger} \right)^2 - I \right)      \right). \label{eq: QFI1  12}
    \end{align}
Substituting \eqref{eq: QFI1  11} and \eqref{eq: QFI1  12} into \eqref{eq:QFI first}, we obtain
\begin{align}
    F_{\varphi}^{(1)}\le 2 \trace  \left( \tilde{g}_{\varphi}^2 \left( \left( U \cosh (2r)  U^{\dagger} \right)^2 - I \right)      \right). \label{eq:QFI1 13}
\end{align}
Substituting the expressions for $\tilde{g}_{\varphi}$ in \eqref{eq:tilda g varphi}, $U$ in \eqref{eq:U}, and $r$ in \eqref{eq:r} into the right hand side of \eqref{eq:QFI1  13}, we have
\begin{align}
  F_{\varphi}^{(1)}  &\le 2\trace \left( g^2_{\varphi}  \left(\left(
             U^{(B)} \cosh(2r_B) U^{(B)\dagger} \right.\right.\right. \nonumber\\
    & \qquad\left.\left.\left.        
            +
              U^{(BA)} \cosh(2r_A) U^{(BA)\dagger}
        \right)^2 - I^{(B)}\right)\right)\\
        & = 8 \trace \left( g^2_{\varphi}  \left(M^2 + M
        \right)\right). \label{eq:F1 u 3}
\end{align}
where $M$ is given in \eqref{eq:M}.
Since $ \sinh^2(r_A) $ and $\sinh^2(r_B) $ are positive semidefinite,  $ M $ and $ M^2 $ are Hermitian and positive semidefinite. 
Moreover, since $ g^2_{\varphi} $ is also Hermitian and positive semidefinite, using $\trace(PQ)\le \|P\|\trace(Q)$, \eqref{eq:F1 u 3} yields
\begin{align}
   F_{\varphi}^{(1)} \le & 8\|g^2_{\varphi} \| \trace \left(   M^2 + M
        \right). \label{eq:QFI1 12.5}
\end{align} 
The right hand side of \eqref{eq:QFI1 12.5} is further bounded using $\trace(M^2) \le \trace(M)^2$ for positive semidefinite $M$ and \eqref{eq:energy constraint},
\begin{align}
   F_{\varphi}^{(1)} \le & 8 \|g^2_{\varphi} \| \left[
        \left(\Bar{n} - \bar{n}_d^{(B)}\right)^2
        +
        \left(\Bar{n} - \bar{n}_d^{(B)}\right)
    \right].\label{eq:QFI1 14}
\end{align}
\subsection{Bound on $F_{\varphi}^{(2)}$}\label{subsec:QFI 2}
We now bound $ F^{(2)}_{\varphi}$ in \eqref{eq: QFI 2}. 
A direct block multiplication using $\tilde{G}_{\varphi}$ in \eqref{eq:tilda g varphi},  and $\Gamma^{-1}$ in \eqref{eq:Gamma^-1} gives
\begin{align}
    F_{\varphi}^{(2)} 
    & = d_B^{T} W_B^{\dagger}  \begin{bmatrix}
       g_{\varphi} X_{22}g_{\varphi}    & -g_{\varphi} Y_{22}g_{\varphi}^* \\
       - g_{\varphi}^* Y^*_{22} g_{\varphi}  &g_{\varphi}^* X_{22}^* g_{\varphi}^*
    \end{bmatrix} W_B d_B,
\end{align}
where  $W_B = \frac{1}{\sqrt{2}} \begin{bmatrix}
        I^{(B)} &iI^{(B)}\\
        I^{(B)}  &-iI^{(B)}
    \end{bmatrix}$, $X_{22}$ is in \eqref{eq:X22}, and $Y_{22}$ is in \eqref{eq:Y22}.
Furthermore, 
    \begin{align}
        \!\!\!\!F_{\varphi}^{(2)} \!&\le \left\| \begin{bmatrix}
         g_{\varphi}  &0 \\
        0  &-g_{\varphi}^*  
    \end{bmatrix} \begin{bmatrix}
        X_{22}    &  Y_{22} \\
        Y^*_{22}   & X_{22}^*
    \end{bmatrix}\begin{bmatrix}
         g_{\varphi}  &0 \\
        0  &-g_{\varphi}^*  
    \end{bmatrix}  \right\|   d_B^{T} d_B  \label{eq:QFI2 11} \\
    &  \le 2 \| g_{\varphi}^2\| \left\| \begin{bmatrix}
        X_{22}    &  Y_{22} \\
        Y^*_{22}   & X_{22}^*
    \end{bmatrix}\right\|    \Bar{n}_{d}^{(B)} \label{eq:QFI2 12} \\
     & \le 2 \| g_{\varphi}^2\| \left(\left\| \begin{bmatrix}
        X_{22}    & 0 \\
        0   & X_{22}^*
    \end{bmatrix}\right\| + \left\|\begin{bmatrix}
       0    &  Y_{22} \\
        Y^*_{22}   & 0
    \end{bmatrix}\right\| \right)   \Bar{n}_{d}^{(B)} \label{eq:QFI2 13} \\
    & =  2\| g_{\varphi}^2\| \left(\left\|  X_{22}   \right\| + \left\|Y_{22} \right\| \right)   \Bar{n}_{d}^{(B)}. \label{eq:QFI2 14}
    \end{align}
Here, \eqref{eq:QFI2 11} follows from Rayleigh quotient, \eqref{eq:QFI2 12} from the submultiplicative of the spectral norm together with $\Bar{n}_{d}^{(B)} = \frac{1}{2}d_B^{T} d_B$, and \eqref{eq:QFI2 13} from the triangle inequality of matrix norm. Finally, \eqref{eq:QFI2 14} follows from
\begin{align}
    \left\|\begin{bmatrix}
       0    &  Y_{22} \\
        Y^*_{22}   & 0
    \end{bmatrix}\right\|^2 &=  \left\|\begin{bmatrix}
       0    &  Y_{22} \\
        Y^*_{22}   & 0
    \end{bmatrix} \begin{bmatrix}
       0    &  Y_{22} \\
        Y^*_{22}   & 0
    \end{bmatrix}^{\dagger} \right\| \nonumber \\
    &= \left\| Y_{22} Y_{22} ^{\dagger} \right\| = \left\| Y_{22}  \right\|^2
\end{align}
We now upper bound  $\|X_{22}\|$,
\begin{align}
   \!\! \|X_{22}\| 
    &\! = \! 4  \left\|  M \right\| \! + \! 2  \!\le\! 4  \trace\! \left(M\right) \!+\! 2 \!=\! 4\! \left(\Bar{n} -\Bar{n}_{d}^{(B)} \right) \!+\!2.\label{eq:QFI2 23}
\end{align}
The first equality follows from  $\cosh 2x = 2 \sinh^2 x +1$ and \eqref{eq:unitary condition}. The inequality comes from $\|M\| \le \trace (M)$ for positive semidefinite $M$, and the last equality follows from \eqref{eq:energy constraint}.

Next, we upper bound $\|Y_{22}\|$.
Define the zero-mean input annihilation operator vectors $\bm{\hat{a}}_{A} = [\hat{a}_A^{(1)} - \langle \hat{a}_A^{(1)} \rangle,\dots,\hat{a}_A^{(m_A)}- \langle \hat{a}_A^{(m_A)} \rangle ]^T$ and $\bm{\hat{a}}_{B} = [\hat{a}_B^{(1)} - \langle \hat{a}_B^{(1)} \rangle,\dots,\hat{a}_B^{(m_B)} - \langle \hat{a}_B^{(m_B)} \rangle]^T$. Since the state $\ket{\psi_{in}}$ is the tensor product of single mode squeezed vacuum states, and writing $\langle \cdot\rangle_{\ket{\psi_{in}}\bra{\psi_{in}}}$ as $\langle\cdot\rangle$, we have
\begin{align}
    \sinh 2 r_A  & = -2 \left\langle \bm{\hat{a}}_{A} \bm{\hat{a}}_{A}^T  \right\rangle , \ \ 
  \sinh 2 r_B   = -2 \left\langle \bm{\hat{a}}_{B} \bm{\hat{a}}_{B}^T \right \rangle  \label{eq:QFI L2}
\end{align}
The annihilation operator vector for the probe state is 
\begin{align}
    \bm{\hat{b}}_{B} = U^{(BA)} \bm{\hat{a}}_{A} + U^{(B )} \bm{\hat{a}}_{B}.
\end{align}
Using  $\langle \hat a_A^{(k)} \hat a_B^{(\ell)} \rangle = 0$  for all  $k,\ell$, $Y_{22}$ satisfies
\begin{align}
 \langle \bm{\hat{b}}_{B} \bm{\hat{b}}_{B}^T \rangle & \! = \! \left\langle   U^{(BA)} \bm{\hat{a}}_{A} \bm{\hat{a}}_{A}^T U^{(BA)T} \!+ \!U^{(B )} \bm{\hat{a}}_{B} \bm{\hat{a}}_{B}^T U^{(B )T}  \right\rangle \label{eq:QFI L3} \\
     & = \frac{1}{4}  Y_{22} .
\end{align}
By Takagi's decomposition of the symmetric matrix $ \langle \bm{\hat{b}}_{B} \bm{\hat{b}}_{B}^T \rangle$, there exists a unitary matrix $U_C$ such that 
\begin{align}
    U_C \langle \bm{\hat{b}}_{B} \bm{\hat{b}}_{B}^T \rangle U_C^{T} =  \frac{1}{4} U_C   Y_{22} U_C^{T} = \langle \bm{\hat{\tilde{b}}}_{B} \bm{\hat{\tilde{b}}}_{B}^T \rangle
\end{align}
is diagonal, where $ \bm{\hat{\tilde{b}}}_{B} = U_C  \bm{\hat{ b }}_{B} $. Therefore, 
\begin{align}
    \|Y_{22}\| 
    &=  \sqrt{\left\|U_C Y_{22} U_C^{T} U_C^{*} Y_{22}^{\dagger} U_C^{\dagger}\right\|} \label{eq:QFI2 24} \\
    &= 4\sqrt{\left\|\langle \bm{\hat{\tilde{b}}}_{B} \bm{\hat{\tilde{b}}}_{B}^T \rangle \langle \bm{\hat{\tilde{b}}}_{B} \bm{\hat{\tilde{b}}}_{B}^T \rangle^{\dagger} \right\|} \\
    &= 4 \left\| \langle \bm{\hat{\tilde{b}}}_{B} \bm{\hat{\tilde{b}}}_{B}^T \rangle \right\|, \label{eq:QFI2 2}
\end{align}
where \eqref{eq:QFI2 24} and \eqref{eq:QFI2 2} follow from $ \|P\| = \sqrt{\|PP^{\dagger}\|}$, and \label{eq:QFI2 25} uses unitary invariance of the norm and $ U_C^{T} U_C^{*} = I $.
Since $\langle \bm{\hat{\tilde{b}}}_{B} \bm{\hat{\tilde{b}}}_{B}^T \rangle $ is diagonal, we have
\begin{align}
     \left\| \langle \bm{\hat{\tilde{b}}}_{B} \bm{\hat{\tilde{b}}}_{B}^T \rangle \right\| & = \max_k \langle  \hat{\tilde{b}}^{(k)}_{B}  \hat{\tilde{b}}^{(k)}_{B} \rangle \label{eq:QFI2 31}\\
     &\le \max_k   \sqrt{\langle \hat{\tilde{b}}^{(k)\dagger}_{B}  \hat{\tilde{b}}^{(k)}_{B} \rangle \langle\hat{\tilde{b}}^{(k)}_{B}  \hat{\tilde{b}}^{(k)\dagger}_{B}\rangle} \label{eq:QFI2 32} \\
     &\le \sqrt{  \left(\Bar{n} -\Bar{n}_{d}^{(B)}\right)\left(\Bar{n} -\Bar{n}_{d}^{(B)}+1\right) } \label{eq:QFI2 33} \\
     &< \Bar{n} -\Bar{n}_{d}^{(B)}+\frac{1}{2}, \label{eq:QFI2 3}
\end{align}
where \eqref{eq:QFI2 32} follows from Cauchy-Schwarz inequality $ \langle l_2 | l_1 \rangle \le \sqrt{\langle l_1 | l_1 \rangle \langle l_2 | l_2 \rangle } $ with $\ket{l_1} = \hat{\tilde{b}}^{(k)}_{B} \ket{\psi_{in}} $ and $\ket{l_2} = \hat{\tilde{b}}^{(k)\dagger}_{B} \ket{\psi_{in}} $, \eqref{eq:QFI2 33} uses that $ \langle \hat{\tilde{b}}^{(k)\dagger}_{B}  \hat{\tilde{b}}^{(k)}_{B} \rangle $, the mean photon number in squeezing of a single mode of the probe state, is upper bounded by the total energy in squeezing, and \eqref{eq:QFI2 3} follows from $ \sqrt{x(x+1)} < \sqrt{x(x+1)+\frac{1}{4}} = x+\frac{1}{2}   $, for $x>0$. 

Substituting \eqref{eq:QFI2 23}, \eqref{eq:QFI2 2}, and \eqref{eq:QFI2 3} into \eqref{eq:QFI2 14},
we conclude that
\begin{align}
    F_{\varphi}^{(2)} &\le  16\|g_{\varphi}^2\| \left(\Bar{n} -\Bar{n}_{d}^{(B)} +\frac{1}{2}\right) \Bar{n}_{d}^{(B)}.
\end{align}

\section{Discussion} \label{sec:discussion}
We have shown that, under a signal-energy constraint, allowing arbitrarily many ancilla modes and arbitrarily large ancilla energy does not increase the maximum achievable QFI for estimating a single parameter embedded in a lossless passive Gaussian unitary. Hence, the ultimate precision in this setting is fully determined by the signal energy budget, and entanglement is not necessary to attain the optimum. The same no-go conclusion also extends to the $L$-use sequential setting under a total-energy constraint. This extends the total-energy-constrained result of Matsubara \textit{et al.} to the more physically motivated regime in which only the transmitted signal energy is constrained.

Ancilla assistance is expected to become useful in the presence of loss and noise. It would also be interesting to determine whether a similar no-go result holds in multiparameter estimation.

\newpage
\bibliographystyle{IEEEtran}
\bibliography{papers}

\end{document}